\documentclass[letterpaper]{article} 
\usepackage{aaai24}  
\usepackage{times}  
\usepackage{helvet}  
\usepackage{courier}  
\usepackage[hyphens]{url}  
\usepackage{graphicx} 
\urlstyle{rm} 
\usepackage{natbib}  
\usepackage{caption} 
\frenchspacing  
\setlength{\pdfpagewidth}{8.5in} 
\setlength{\pdfpageheight}{11in} 
%
\usepackage{algorithm}
\usepackage{algorithmic}

\usepackage{booktabs, amsfonts, nicefrac, microtype, amsmath} 

\usepackage{color, amssymb, amsthm, bigints, graphicx, subcaption, pifont, float}
\newcommand{\cmark}{\ding{51}}%
\newcommand{\xmark}{\ding{55}}%
\theoremstyle{definition}

\newtheorem{proposition}{Proposition}
\newtheorem{assumption}{Assumption}

\newtheorem{remark}{Remark}

%

%
\usepackage{newfloat}
\usepackage{listings}
\DeclareCaptionStyle{ruled}{labelfont=normalfont,labelsep=colon,strut=off} 
\lstset{%
	basicstyle={\footnotesize\ttfamily},
	numbers=left,numberstyle=\footnotesize,xleftmargin=2em,
	aboveskip=0pt,belowskip=0pt,%
	showstringspaces=false,tabsize=2,breaklines=true}
\floatstyle{ruled}
\newfloat{listing}{tb}{lst}{}
\floatname{listing}{Listing}
%
\pdfinfo{
/TemplateVersion (2024.1)
}

\setcounter{secnumdepth}{0} 

%


\title{Meta-Learning-Based Adaptive Stability Certificates for Dynamical Systems}
\author{
    Amit Jena\textsuperscript{\rm 1},
    Dileep Kalathil\textsuperscript{\rm 1},
    Le Xie\textsuperscript{\rm 1}
}
\affiliations{
    \textsuperscript{\rm 1}Department of Electrical and Computer Engineering, Texas A\&M University, USA \\


    amit\_jena@tamu.edu, dileep.kalathil@tamu.edu, le.xie@tamu.edu
%
}

\usepackage{bibentry}

\begin{document}

\maketitle

\begin{abstract}
This paper addresses the problem of Neural Network (NN) based adaptive stability certification in a dynamical system. The state-of-the-art methods, such as Neural Lyapunov Functions (NLFs), use NN-based formulations to assess the stability of a non-linear dynamical system and compute a Region of Attraction (ROA) in the state space. However, under parametric uncertainty, if the values of system parameters vary over time, the NLF methods fail to adapt to such changes and may lead to conservative stability assessment performance. We circumvent this issue by integrating Model Agnostic Meta-learning (MAML) with NLFs and propose meta-NLFs. In this process, we train a meta-function that adapts to any parametric shifts and updates into an NLF for the system with new test-time parameter values. We demonstrate the stability assessment performance of meta-NLFs on some standard benchmark autonomous dynamical systems.
\end{abstract}


\section{Introduction}
Stability assessment of non-linear systems and ensuring their safe and reliable operation are of paramount importance in any real-world engineering system.  While learning-based control schemes have received  a lot of attention recently, the lack of stability guarantees is a fundamental issue that prevents their wide-scale deployment in the real world.   One standard approach to estimate the stability region of a general nonlinear system is to first find a Lyapunov function for the system and characterize its region of attraction (ROA) as the stability region \cite{khalil2015nonlinear}. A closed-loop system is stable in the sense of Lyapunov if the system trajectory converges to the origin as long as the initial condition is inside the ROA. The sum-of-squares approach is one popular method for finding a Lyapunov function for a dynamical system \cite{parrilo2000structured, henrion2005positive,jarvis2003some,topcu2009robust, topcu2008local}. However, sum-of-squares approaches typically do not scale well to large systems since a large number of semidefinite programs need to be solved for the sum-of-squares decomposition of polynomial systems even with a few states \cite{parrilo2000structured}.  Another approach is to employ local linearizations and use quadratic  approximations to find Lyapunov functions. However, this approach is typically conservative in the sense that stability can only be certified in a small vicinity of an equilibrium point of a nonlinear system, which may be insufficient to cover the normal range of operation in several practical applications \cite{cheng2003quadratic,huang2021neural}.

Recently, a line of works has used learning-based approaches that make use of the function approximation capability of a neural network for finding the Lyapunov function  \cite{kolter2019learning, chang2019neural}. The key idea is to use supervised learning to find a neural Lyapunov function (NLF) by  minimizing a loss function that captures the  Lyapunov constraints. The NLF approaches have shown impressive empirical performance in estimating the stability region for nontrivial nonlinear systems. However, training an NLF requires a large number of data samples and gradient update steps, which makes the real-time training infeasible.  So, a typical fix is to use the  system model or pre-collected data from the real-world system and perform offline training. The trained NLF  can then be used for the stability estimation of the real-world system. However, this approach will fail if the real-world system dynamics is different from the model used for training the NLF. At the same time,  the real-world system model can be different from the model estimated from the collected data due to various reasons, such as estimation error and changes in the system parameters over time. Repeating the training procedure every time whenever there is such a parametric mismatch turns impractical due to the unavailability of necessary data samples and the need to get a quick stability assessment. Thus, \textit{learning a neural Lyapunov function for a real-world system using only a small number of data samples and through a few gradient updates}, remains an open problem.

In this paper, we address the problem of stability assessment of a closed-loop system, focusing on the setting where the real-world system parameters are different from the offline model parameters. Our goal is to learn a NLF that : (i) quickly adapts to a new system instance under parametric shifts, (ii) and does so with low adaptation-time data requirements. We follow a meta-learning procedure, and in particular, the framework of Model Agnostic Meta-learning (MAML) \cite{finn2017model}, to learn such a meta-neural Lyapunov function (meta-NLF).  We compare the performance of the meta-NLF approach with other baseline methods on various benchmark control systems to demonstrate the efficacy of our method. Our codes and appendix are available at \url{https://github.com/amitjena1992/Meta-NLF}.


\section{Related Work}

Computing Lyapunov functions is one of the classical techniques for estimating the stability of non-linear systems. Various methods try to learn Lyapunov functions as polynomials \cite{kapinski2014simulation, ravanbakhsh2016robust}, support vectors \cite{khansari2014learning}, Gaussian processes \cite{umlauft2018uncertainty} and temporal logic formulae \cite{jha2017telex}. A few widely-adopted ones among such methods are Sum of Squares (SOS) based polynomial approaches \cite{parrilo2000structured}, and quadratic Lyapunov functions \cite{cheng2003quadratic}.

To overcome a few existing shortcomings of model-based approaches, a set of recent works \cite{richards2018lyapunov, kolter2019learning, chang2019neural, taylor2019episodic, choi2020reinforcement, mehrjou2020neural, jin2020neural, dai2021lyapunov, dawson2022safe} employ neural networks to approximate Lyapunov functions. The core of these works is to train a neural network to optimize a loss function that captures the Lyapunov constraints. These methods require either the dynamical system's closed-form expression or offline trajectory data thereof \cite{boffi2021learning}. Such methods usually perform well but remain susceptible to parameter shifts in dynamical systems.

Meta-learning provides a framework where a machine learning model gathers experience over multiple training tasks and leverages this experience to better its future learning performance on similar tasks \cite{schmidhuber1997shifting, meta-bengio1990learning, meta-thrun1998learning, meta-hochreiter2001learning, meta-younger2001meta}. In particular,  Model Agnostic Meta-learning (MAML) \cite{finn2017model} is a well-recognized method that is used in many areas, including computer vision \cite{meta-cv-achille2019task2vec}, natural language processing \cite{meta-nlp-huang2018natural}, reinforcement learning  \cite{meta-rl-nagabandi2018learning} and system identification and control \cite{meta-idc-shi2021meta}. However, to the best of our knowledge, a meta-learning-based stability assessment of dynamical systems has not been addressed before. 


\section{Preliminaries}
We consider a continuous-time (closed-loop) autonomous dynamical system of the form
\begin{equation}
\label{eq:dyn_def}
    \dot x = f_{\vartheta}(x),
\end{equation}
where the state  $x \in X \subset \mathbb{R}^d$ is fully observed, and the  dynamics is represented by the continuous function $f_{\vartheta}:\mathbb{R}^d \rightarrow \mathbb{R}^d$ where $\vartheta \in \mathbb{R}^\mathfrak{d}$ is the parameter that characterizes the dynamics. For any given initial condition $x_{0}$, we assume that this   dynamics gives a unique trajectory $x(t; x_{0}), t \geq 0$, with $x(0; x_{0}) = x_{0}$. In the following, we simply represent the trajectory as $x(t)$, making the dependence on the initial condition implicit. 

A closed-loop system  is said to be stable  (in the sense of Lyapunov) at the origin ($x \equiv 0$) if for any $\epsilon > 0$; there exists a $\delta > 0$ such that whenever $\|x(0)\| < \delta \implies \|x(t)\| <\epsilon,~~\forall t >0$. The system is said to be asymptotically stable at the origin if it is  stable and there exists a $\delta > 0$ such that  $\lim_{t\rightarrow \infty}\|x(t)\| = 0$ for any initial condition $\|x(0)\| < \delta$. One traditional way to certify the asymptotic stability of a system is through  a Lyapunov function. 

A continuously differentiable function $V: D \rightarrow \mathbb{R}$, $D \subset X$, is called a (strict) Lyapunov function for the system  \eqref{eq:dyn_def} if it satisfies the following conditions: 
\begin{subequations}
\label{eq:lyapunov-condition}
\begin{align}
\label{eq:lyapunov-condition-1}
    V(0) &= 0, \\
    \label{eq:lyapunov-condition-2}
    V({x}) &> 0,~\forall x \in D \setminus \{0\}, \\
    \label{eq:lyapunov-condition-3}
    \dot{V}({x}) &= \nabla V({x})^\top {f}({x})<0, ~\forall x \in D \setminus \{0\}.
    \end{align}
\end{subequations}

If $V$ is the Lyapunov function for the system  \eqref{eq:dyn_def}, then the system is asymptotically stable, \cite{khalil2015nonlinear}. The region $D$ is called the valid region which contains the ROA in the form of the largest level set of $V$.

In general, sum-of-squares approaches \cite{parrilo2000structured,henrion2005positive,jarvis2003some,topcu2009robust,topcu2008local} can be used to assess the stability of  nonlinear systems. Alternatively, local linearizations of  the dynamics  may be used to compute a quadratic Lyapunov function \cite{chiang1989study}. However, these approaches often fail to find a meaningful Lyapunov function and ROA for large-scale, high-dimensional,  networked systems because of the following challenges. Firstly, sum-of-squares approaches do not scale well computationally and will quickly become intractable when the system dimension increases \cite{parrilo2000structured,henrion2005positive,jarvis2003some,topcu2009robust,topcu2008local}. Secondly, sum-of-squares and quadratic approximation-based approaches are typically very conservative in their estimation of ROA even for small systems \cite{chang2019neural}. This may lead to the design of conservative controllers and sub-optimal system operation. 

In order to overcome these  challenges, some recent works have exploited the data-based  function approximation capability of a neural network to \textit{learn} Lyapunov functions for nonlinear systems  \cite{kolter2019learning, chang2019neural}. The goal is to learn an NLF $V_{\theta}$ where $\theta$ is the parameter of the neural network, which represents the Lyapunov function.  In particular, \citet{chang2019neural} has defined the Lyapunov loss function as 
\begin{align}
    \label{eq:lyapunov-risk1} 
    L(\theta) &= \mathbb{E}_{(x,y) \sim \rho_\vartheta} [\ell(\theta, (x,y))],~ \text{where} \\
    \label{eq:lyapunov-risk2} 
    \ell(\theta, (x,y)) &=  [  \max (0, -V_\theta(x)) \nonumber \\
    &\hspace{1cm} + \max(0, \nabla V_\theta(x)^\top y) +V_\theta^2(0)], 
\end{align}
with $y = f_{\vartheta}(x)$ and $\rho_{\vartheta}(x)$ being a joint distribution over $(x, y)$.  The loss function $\ell(\cdot, \cdot)$ is defined to  translate each condition in \eqref{eq:lyapunov-condition} to an equivalent loss. In particular, the first term is to ensure the condition \eqref{eq:lyapunov-condition-2} that the $V_{\theta}$ is non-negative, the second term is  to ensure the condition \eqref{eq:lyapunov-condition-3} that  the Lie derivative of $V_{\theta}$ is negative and the third term enforces the condition \eqref{eq:lyapunov-condition-1} that $V_{\theta}(0)$ is zero.  The learning algorithm then finds a parameter by minimizing the empirical loss function 
\begin{align}
    \label{eq:lyapunov-empirical-loss1} 
    \widehat{L}(\theta, S) &= \frac{1}{|S|} \sum_{z \in S} \ell(V_\theta, z),
\end{align}
where each $z = (x,y) \in S$ is generated according to the distribution $\rho_{\vartheta}$ with $y = f_{\vartheta}(x)$ . The NLF approaches have shown impressive empirical performance in estimating the stability region for nontrivial nonlinear systems  \cite{kolter2019learning, chang2019neural, huang2021neural}.
\section{Meta-Neural Lyapunov Function}

The NLF approach follows a standard supervised learning method that requires a large number of training data samples from a fixed system and a large number of stochastic gradient descent-based parameter updates to  learn the Lyapunov function. So, the NLF training can only be done offline, and the trained NLF can then be used to certify the stability of the real-world system. However, the operating conditions of  many real-world engineering systems can vary over time, which will also change their closed-loop system behavior. For example, in electricity distribution systems,  the load pattern, topology, and line impedance can change due to the presence of rooftop photovoltaics (PVs), electric vehicles, battery storage, power electronics devices, and other uncertainty-inducing components. This presents two main challenges for the NLF approach to be used for certifying the stability  of the real-world systems: (i) data used for the offline training of the NLF may not represent the behavior of the real-world system, and hence the pre-trained NLF may not give the correct stability certificate for this system, (ii) it may not be possible to collect enough data from the real-world system and perform a large number of gradient updates to learn a new NLF for this system.  We propose to overcome these challenges with a meta-learning approach. In particular, through the model-agnostic meta-learning (MAML)  \cite{finn2017model}, we seek to formulate a meta-NLF capable of adapting to the real-world system with a small number of training samples and gradient steps. In what follows, we refer to the meta-NLF and its task-adapted NLF as $V_\theta$ and $V_{\theta^{'}}$ respectively where $\theta^{'}$ results from a K-step adaptation of the meta-parameter $\theta$ on a given dynamical system.

\begin{figure*}[t]
     \begin{subfigure}[b]{0.24\textwidth}
         \centering
         \includegraphics[width=\textwidth]{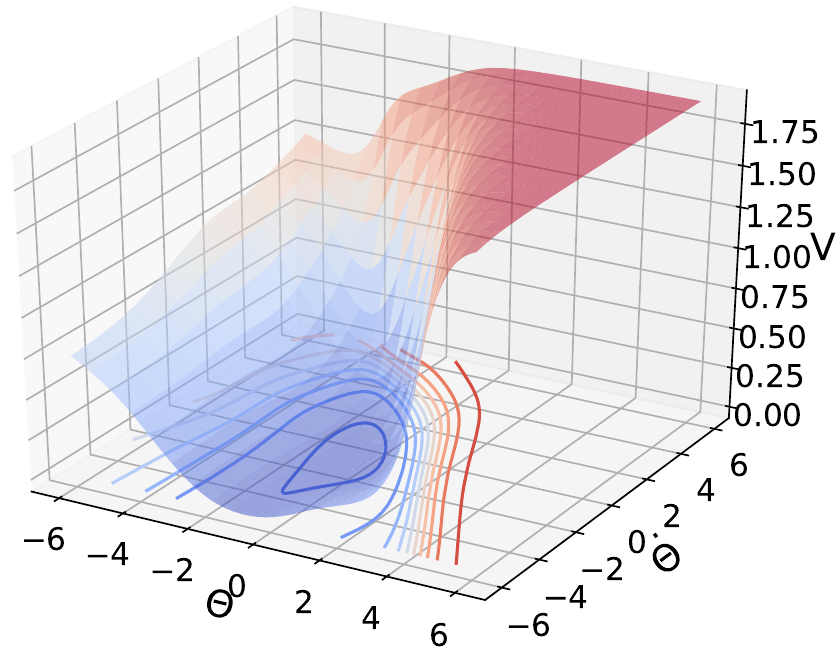}
        \caption{}
     \end{subfigure}
     \hfill
     \begin{subfigure}[b]{0.24\textwidth}
         \centering
         \includegraphics[width=\textwidth]{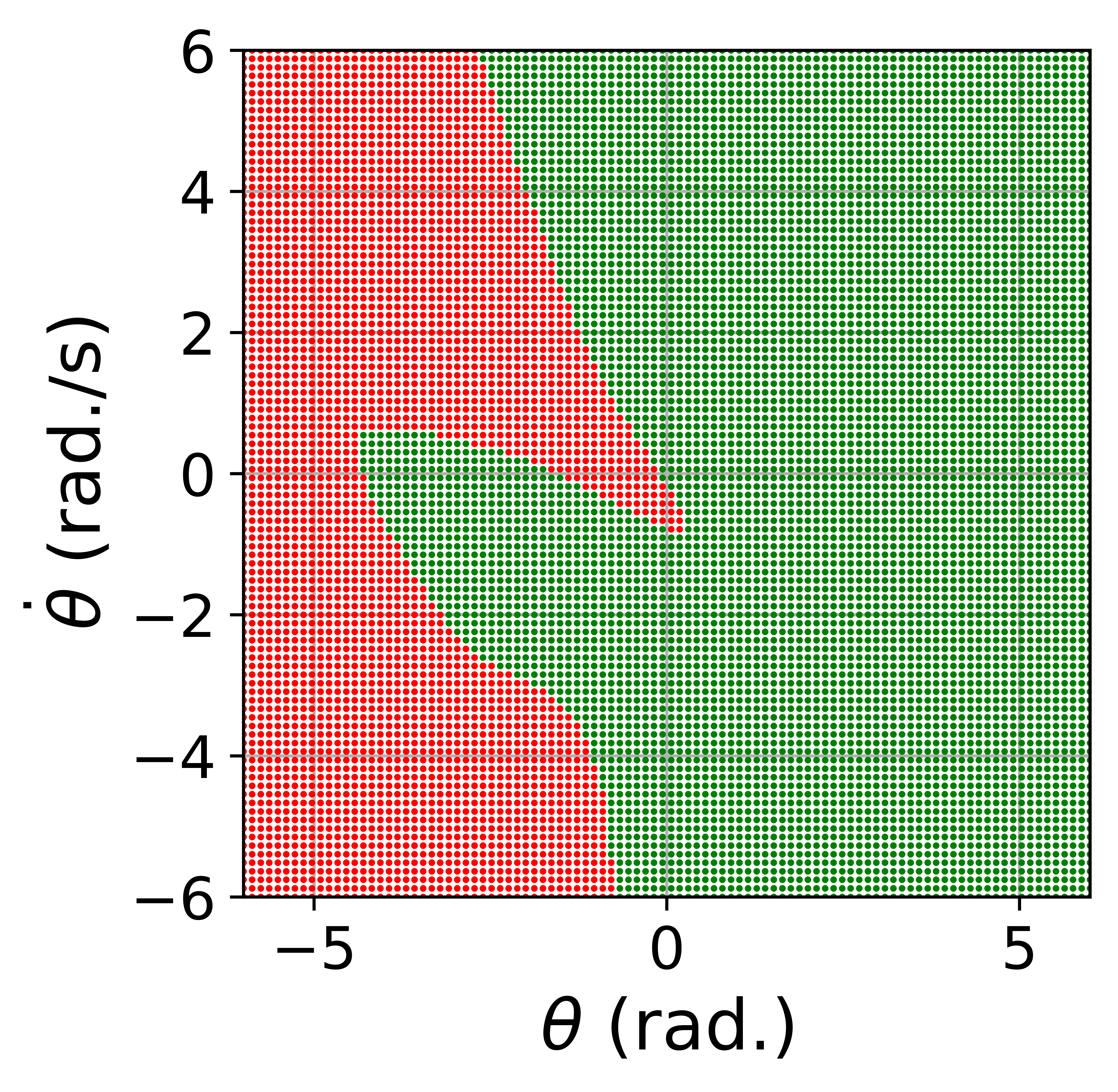}
        \caption{}
        \label{fig:transition-ROA-meta}
     \end{subfigure}
     \hfill
     \begin{subfigure}[b]{0.24\textwidth}
         \centering
         \includegraphics[width=\textwidth]{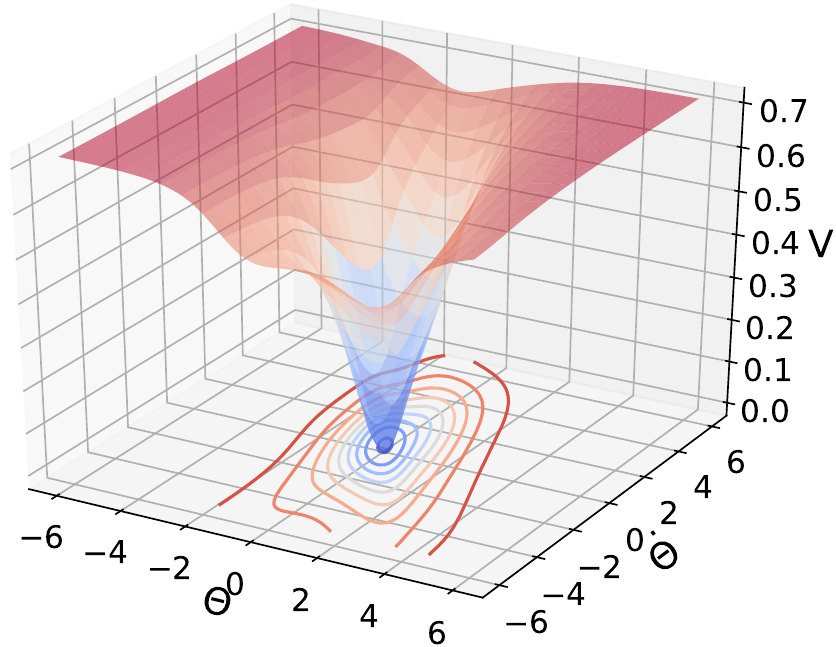}
        \caption{}
     \end{subfigure}
     \hfill
     \begin{subfigure}[b]{0.24\textwidth}
         \centering
         \includegraphics[width=\textwidth]{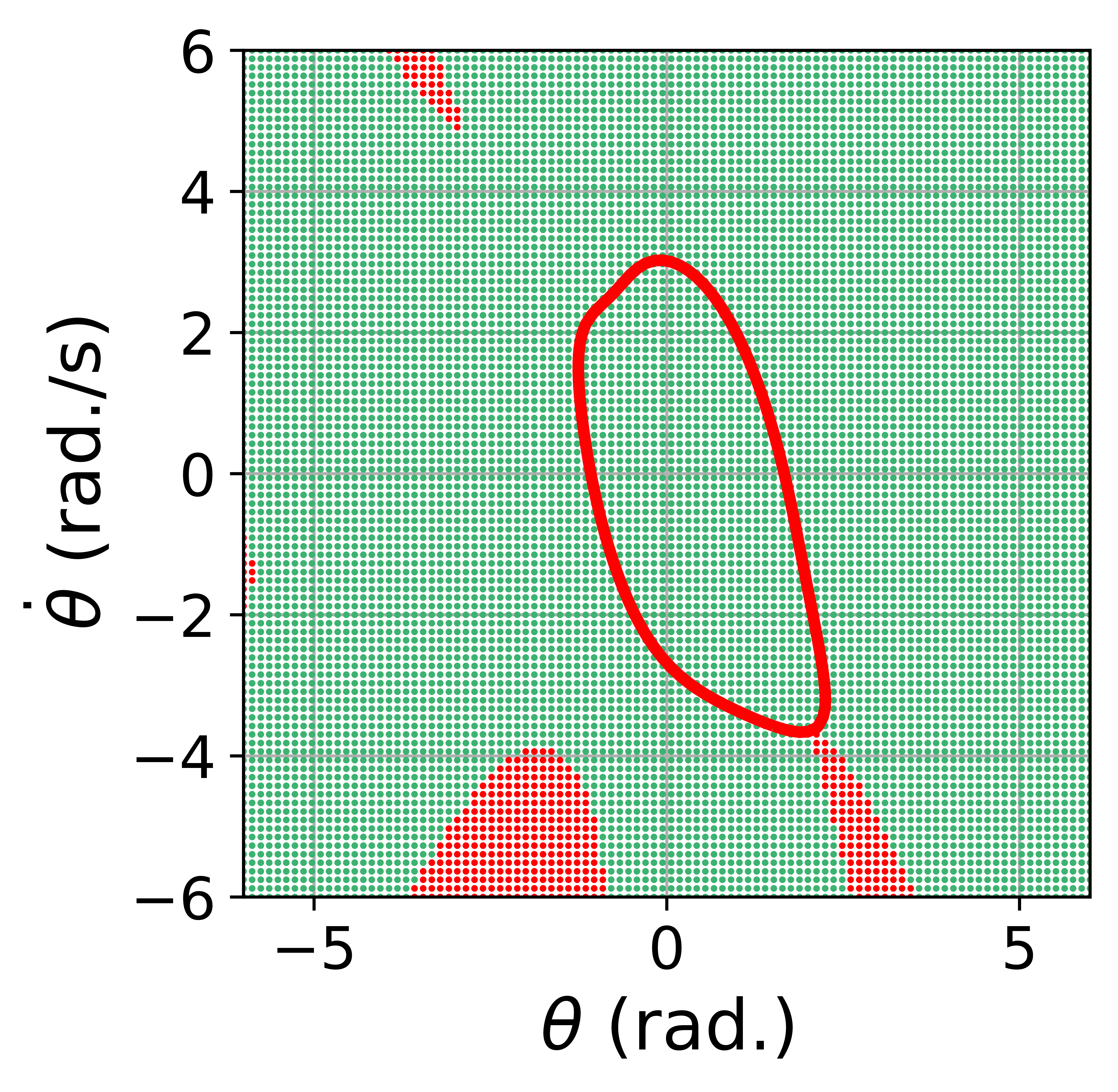}
        \caption{}
        \label{fig:transition-ROA-task}
     \end{subfigure}
     \caption{\textit{Plots describing the transition of meta-NLF into a task-adapted NLF}. All figures correspond to the Inverted Pendulum setup with stochastic $l$. A well-calibrated meta-training, results in the meta-NLF shown in (a). In (b), the area in $D$ where the meta-NLF satisfy the Lyapunov constraints is colored in green, and the other region is colored in red. A test-time system and a few samples thereof lead to the task-adapted NLF in (c). The valid and invalid regions (Similar to Figure 1b.) for the task-adapted NLF are presented in (d), along with an ROA shown as a contour in red. Although the meta-NLF trains with the Lyapunov loss across a set of system instances, it itself doesn't serve as a Lyapunov function on any test-time system which is evident from Figure. 1b, where sketching an ROA isn't possible. However, a one-step gradient update leads to the task-adapted NLF whose stability assessment performance considerably improves than that of the meta-NLF.}
     \label{fig:transition-meta-to-task}
    \end{figure*}

\subsection{Verification of Meta-NLF}
\label{subsubsec:Ver-meta-nlf}
For a Lyapunov function to be valid, constraints \eqref{eq:lyapunov-condition-1}, \eqref{eq:lyapunov-condition-2} and \eqref{eq:lyapunov-condition-3} must satisfy everywhere in $D$.  But, the minimization of the Lyapunov loss function (as stated in \eqref{eq:lyapunov-risk2}) doesn't automatically translate to the satisfaction of Lyapunov constraints in $D$. Thus, a verification process such as an SMT solver \cite{chang2019neural} or the Lipschitz technique \cite{richards2018lyapunov}
becomes necessary. Using a logic formula that is made up of negation of all Lyapunov constraints, the SMT solver filters out state vectors 
that satisfy the logic formula \textit{or violate the Lyapunov conditions}. However, from an implementation standpoint,  this method doesn't scale well to larger systems, as reported in \cite{jena2022distributed}. Hence, we adopt the Lipschitz 
method introduced in \cite{richards2018lyapunov}. The idea here is that in order to enforce $\dot V (x) < 0$ for  $x \in D$, the tightened condition
$\dot V (x) < -K\tau = -\epsilon_2$ for finitely many samples in $D$ serves as a sufficient condition, where the time derivative of the Lyapunov function $\dot V(x)$ is assumed to be a $K-$Lipschitz function (with respect to 1-norm) and $\tau >0$ signifies how densely the samples cover $D$. We make use of this result to ensure
the time derivative of a task-adapted NLF stays valid across $D$ once it meets the tightened conditions at each node in a fine-grained discretized grid in $D$. A detailed proof of this result can be found in \cite{berkenkamp2017safe}. For positive definiteness, we argue along similar lines that for a Lipschitz task-adapted NLF $V_{\theta^{'}}$, if the tightened condition $V_{\theta^{'}}(x) > \epsilon_1 >0$ holds for finitely many sample points in $D$, then $V_{\theta^{'}}$ is positive definite throughout $D$. Proof of this result is provided in the appendix. Finally, we use the same technique as in \cite{chang2019neural}, and subtract a bias term  $V_{\theta^{'}}(0)$ from a task-adpated NLF to ensure the satisfaction of condition \eqref{eq:lyapunov-condition-1}. For meticulously-optimized meta-NLFs, this bias term attains a small value on any training-time or test-time system, thus not affecting the stability performance significantly.  

Next, we modify the loss function presented in \eqref{eq:lyapunov-risk2} to account for the new tightened versions of \eqref{eq:lyapunov-condition-2} and \eqref{eq:lyapunov-condition-3} as the following:
\begin{eqnarray}
\label{eq:tightened-lyap-cond}
     & \ell(\theta, (x,y)) =  [  \max (0, \epsilon_1-V_\theta(x)) \nonumber \\
    &\hspace{1cm} + \max(0, \epsilon_2+\nabla V_\theta(x)^\top f_\vartheta(x)) +V_\theta^2(0)].
\end{eqnarray}
Here, we consider $\epsilon_1$ and $\epsilon_2$ tunable hyperparameters. In addition to including these constants in the loss function, we also rigorously check the tightened conditions on a discretized grid in $D$ once a trained meta-NLF adapts to a test-time system. We provide a visual presentation of this validation in Figure \ref{fig:transition-ROA-task} where if the task-adapted NLF satisfies the tightened Lyapunov constraints at a node in the grid, the node is marked green and marked red otherwise.

Next, we provide the Lipschitz assumptions on the dynamics $f(x)$, meta-NLF $V_\theta(x)$, and gradient of meta-NLF $\nabla V_\theta$. It's straightforward to prove that Lipschitzness of $\nabla V_\theta(x)$ and $f(x)$ result in $\dot V_\theta(x) = \nabla V_\theta(x)^T f(x)$ being Lipschitz. 
\begin{assumption}
\label{assum:lipschitz_}
For any $x_1, x_2 \in X$, the following satisfy for any $\theta \in \Theta$:
\begin{subequations}
\begin{align}
    |V_{\theta}(x_1) - V_{\theta}(x_2)| &\leq K_{V} \|x_1-x_2\| \label{eq:-lip-lyap-a},\\
    \|\nabla V_{\theta}(x_1) - \nabla V_{\theta}(x_2)\|_1 &\leq K_{\nabla V} \|x_1- x_2\|_1 \label{eq:-lip-grad-b}, \\
    \|f(x_1) - f(x_2)\|_1 & \leq K_{f} \|x_1- x_2\|_1 \label{eq:-lip-fun-c}
\end{align}
\end{subequations}
where $K_V$, $K_{\nabla V}$ and $K_f$ are positive Lipschitz constants and $\|.\|_1$ is the standard $\ell_1$ norm.
\end{assumption}
According to the above assumption, the Lipschitzness is independent of the choice of $\theta$. Thus, it's easy to verify that the task-adapted NLF $V_{\theta^{'}}(x)$ and its gradient will naturally inherit the Lipschitz property from the meta-NLF.

\subsection{Problem Formulation}
\label{subsec:prob_form}
The MAML framework learns a meta-parameter using the data from multiple tasks. In our context, a task $i$ corresponds to learning the NLF for the system $\dot{x} = f_{\vartheta_{i}}(x)$ with parameter $\vartheta_{i}$. For any $i$-th task, we assume that the algorithm has access to a dataset $S_{i}$ consisting of $m_i$ mini data-batches $S_{i} = \{(S^{\mathrm{tr}}_{i,j}, S^{\mathrm{te}}_{i,j})\}_{j=1}^{m_i}$. Each mini data-batch $(S^{\mathrm{tr}}_{i,j}, S^{\mathrm{te}}_{i,j})$ comprises of $K$ and $J$ number of samples, where each sample $z \in S_{i}$ is $z = (x, y) \sim \rho_{i}$ with $y = f_{\vartheta_{i}}(x)$. The task-specific loss function $L_{i}$ is defined as $L_{i}(\theta)= \mathbb{E}_{z \sim \rho_{i}}[\ell(\theta,z)]$, where $\ell(\cdot, \cdot)$ is as defined in \eqref{eq:tightened-lyap-cond}. The empirical loss function $\widehat{L}_{i}$ is then defined as in \eqref{eq:lyapunov-empirical-loss1}. During the training, the MAML algorithm uses the data from $n$ tasks (containing a total of $\sum_{i=1}^n m_i= m$ mini data-batches). The goal is to learn the meta-parameter that can quickly adapt to a new task  by using $k$-step of stochastic gradient descent  with a batch of size $K$. 

To formally introduce this problem,  define the function $\hat{\mathcal{L}}_{i,j}(\theta)$, which captures the performance of the meta-parameter $\theta$ for task $i$ once it is updated by a single step of gradient descent on $S_{i, j}= (S_{i, j}^\text{tr}, S_{i, j}^\text{te})$,
\begin{align}
    \label{eq:meta-task-loss-1}
    \hat{\mathcal{L}}_{i,j}(\theta) = \hat L_{i}(\theta - \alpha \nabla \hat{L}_{i}(\theta, S_{i,j}^\text{tr}), S_{i,j}^\text{te}).
\end{align}
Next, for any $i$-th task, we define $\mathcal{L}_{i}(\theta) = \mathbb{E}_{S_{i,j} \sim \rho_i^{K+J}}[\hat{\mathcal{L}}_{i,j}(\theta)]$.
The MAML problem is to find the optimal meta-parameter that performs well after one step of adaptation on each task within a given set of tasks. This is posed as the optimization problem  $ \min_{\theta} \mathcal{L}(\theta)$, where $\mathcal{L}(\theta) = \mathbb{E}_{i \sim [n]}[\mathcal{L}_{i}(\theta)]$. The MAML algorithm then solves the empirical problem  $ \min_{\theta} \widehat{\mathcal{L}}(\theta, S)$ where $\widehat{\mathcal{L}}(\theta, S) = \frac{1}{m} \sum^{n}_{i=1} \sum_{j =1}^{m_i} \widehat{\mathcal{L}}_{i, j}(\theta)$

\begin{algorithm}[t]
    \begin{algorithmic}[1]
\caption{ Meta-NLF Training Algorithm}
\label{algor:MAML-NLF}
    \STATE {\textbf{Input:}} $n$ closed-loop systems $\dot x = f_{\vartheta_{i}}(x), 1 \leq i \leq n,$  and the corresponding datasets $\{S_{i}\}_{i=1}^n$,  adaptation and meta step sizes $(\alpha, \tilde \alpha)$, meta-training batch size $P$, total meta-training steps $\mathcal{K}$. 
    \STATE {\textbf{Initialize:}} Meta-parameter $\theta^{0}$ 
   \FOR{$k = 0, \hdots, \mathcal{K}-1$}
   \FOR{$p = 1, \hdots, P$}
      \STATE Randomly sample a task index $i_p$ in $\{1, \hdots, n\}$.
      \STATE Randomly select a mini-batch $S_{i_p, j_p} = (S_{i_p, j_p}^\text{tr}, S_{i_p, j_p}^\text{te})$ from all mini-batches of $i_p-$th task.
   \STATE \textbf{Adaptation:} $\theta_{i_p}^{k} \leftarrow \theta^{k} - \alpha \nabla \hat{L}_{i_p}({\theta^{k}}, {S}_{i_p, j_p}^{\text{tr}})$
   \ENDFOR
   \STATE \textbf{Meta-parameter update}: 
   \begin{align*}
       \theta^{k+1} = \theta^{k} - \tilde\alpha \nabla \frac{1}{P} \sum^{P}_{p=1}   \hat{L}_{i_p}({\theta_{i_p}^{k}}, {S}^{\mathrm{te}}_{i_p, j_p})
   \end{align*}
   \ENDFOR
  \STATE {\textbf{Output:}} $\theta_{\mathrm{mnlf}} = \theta^{\mathcal{K}}$ 
\end{algorithmic}
\end{algorithm}

\subsection{Meta-NLF Training and Adaptation}
Our  motivation is to learn a meta-NLF that can be adapted to learn an NLF for the real-world system whose dynamics can be different from the nominal system model   $\dot{x} = f_{\vartheta_{\mathrm{o}}}(x)$ where $\vartheta_{\mathrm{o}}$ is the parameter of the nominal system. For training, we consider multiple tasks by sampling the system parameters $\vartheta_{i} \sim \mathcal{N}(\vartheta_{\mathrm{o}}, \Sigma_{\vartheta})$. For $i-$th task pertaining to parameter $\vartheta_{i}$, we sample the state $x_{i,j}$ from the domain $D$ and get $y_{i,j} = f_{\vartheta_{i}}(x_{i,j})$ to constitute the data sample $z_{i,j} = (x_{i,j}, y_{i,j})$. A collection of such data samples $(z_{i,j})^{(K+J)m_i}_{j=1}$ form the dataset $S_{i} = \{(S^{\mathrm{tr}}_{i,j}, S^{\mathrm{te}}_{i,j})\}_{j=1}^{m_i}$. The training is  similar to the standard MAML training, which we summarize in Algorithm  \ref{algor:MAML-NLF}. The output of the algorithm is a meta-parameter $\theta_{\mathrm{mnlf}}$ that builds the meta-NLF $V_{\theta_{\mathrm{mnlf}}}$.

During testing, we assume that a new system with parameter $\vartheta_{\mathrm{n+1}}$ is realized, and  a dataset $S_{n+1} = (S^{\mathrm{tr}}_{n+1})$ has been made available. We then perform a $k$-step adaptation on this task with the meta-parameter as the initialization as,
\begin{align}
     \theta^{k+1}_{n+1} \leftarrow \theta^{k}_{n+1} - \alpha \nabla \hat{L}(\theta_{n+1}^k, S_{n+1}^\text{tr}), 
\end{align}
with $\theta^{0}_{n+1} = \theta_{\mathrm{mnlf}}$. A set of figures illustrating the transition from a meta-NLF into a task-adapted NLF is shown in Figure \ref{fig:transition-meta-to-task}. These figures and the simulation results of other experiments communicate a central message that a well-trained meta-NLF tries to find an optimal initialization parameter $\theta_\text{mnlf}$ from which less conservative task-adapted NLFs can be computed in $k$ steps. However, the meta-NLF itself doesn't aim to satisfy the Lyapunov conditions strictly, thus being undeployable for any test-time system without needing further test-time gradient updates.



\section{Experiments}
\label{sec:experiments}

We demonstrate the performance of meta-NLF with other standard stability assessment methods on various closed-loop dynamical systems following definition \eqref{eq:dyn_def}. Each system is already equipped with a controller, such as an LQR controller or a power system-specific droop controller \cite{yu2015analysis}. Next, for parametric uncertainty, we assume a few system parameters to take random values in a pre-specified range. For example, the length of an inverted pendulum randomly taking a value in $[0.5, 1.0]$ can cause parametric uncertainty. To make our setup as realistic as real-world's, we assume the following:
\begin{itemize}
    \item The meta-distribution function $P_t$ that governs the uncertainty of parameter vector $\vartheta$ is unknown.
    \item The parametric uncertainty is experienced only through a gap between the deterministic parameter vectors $\vartheta_0$ and $\vartheta_{n+1}$ that instantiate the nominal and test-time systems respectively.
\end{itemize}

\subsection{Baseline Methods}
The baseline methods include three non-adaptive Lyapunov functions, namely the Sum of Squares-based Lyapunov Function (SOS-LF), the Quadratic Lyapunov Function (QLF), and the standard Neural Lyapunov Function (NLF). In several numerical experiments, we have observed that when there is a substantial gap between $\vartheta_0$ and $\vartheta_{n+1}$, each non-adaptive method that's been trained on the nominal system results in a very conservative ROA estimate when directly evaluated (without any further gradient updates) on the test-time system. Furthermore, this approach often doesn't yield an ROA at all. Hence, for a meaningful performance comparison purpose, we have given the non-adaptive methods a significant advantage by making the test-time system available to them. We change the notations of the baselines to \textbf{SOS-LF(TS)}, \textbf{QLF(TS)}, and \textbf{NLF(TS)} to indicate their access to the test-time system. Lastly, an adaptive baseline method is created by training an NLF on the nominal system and then naively transferring (with a few gradient updates) to the test-time system. We call this the transfer learning-based NLF \textbf{(T-NLF)}. To summarize, each method experiences the nominal and test-time systems as per the following setting:
\begin{itemize}
    \item QLF(TS), SOS-LF(TS), and NLF(TS) never see the nominal system but only gain explicit access to the test-time system
    \item T-NLF gets full access to the nominal system for neural network training. Then, it gets little access to the test-time system where a very small dataset (e.g. 50 training samples) is available for quick fine-tuning (e.g. 10 gradient step updates).
    \item In meta-NLF, the nominal system is always available. A set of similar system instances are created around the nominal one for meta-training. A fully meta-trained NN is then given access to the test-time system in the same manner as T-NLF.   
\end{itemize}

\subsection{Selection of a Valid Region \textit{D}}
During meta-training, it's important to select a valid region $D$ in which a meta-NLF corresponding to each training system will stay valid. Due to the lack of scalability to larger systems, unlike the original NLF approach \cite{chang2019neural}, we don't use an SMT solver to select $D$. Instead, we follow an alternative procedure to create an appropriate $D$. During the meta-training, we start with a choice of $D:= \{x : \|x\|_2 \leq d \}$ to train $V_\theta$, where the radius $d$ is a tunable hyperparameter. Then, after a full course of meta-training, we check the tightened conditions for each task-adapted NLF $V_{\theta_i}$, $i \in \{1, \hdots, n\}$ on a fine-grained rectangular grid in the same $D$. If this $D$ contains regions where any $V_{\theta_i}$ violates at least one of the tightened Lyapunov conditions, then we shrink $D$ by reducing radius $d$ and repeat the whole procedure of meta-training. We continue this process until all $V_{\theta_i}$s are completely valid in our choice of $D$. Finally, if no such $D$ exists, the algorithm is concluded to fail to obtain a suitable meta-NLF. For T-NLF and NLF(TS), we follow the same series of steps to compute an appropriate $D$. We admit that our procedure of selecting a $D$ is computationally expensive as it involves validating each task-adapted NLFs after the convergence of meta-training. Designing an algorithm that more efficiently selects a $D$ is a part of our future work.

\subsection{Why an Adaptive Lyapunov Function is Needed?}

From a deployment standpoint, among all the above methods (including meta-NLF), QLF(TS) can be computed in the least time owing to the linearization-based simplification of the dynamics. SOS-LF(TS) comes next in terms of the computational time required to generate a Lyapunov function. However, both QLF(TS) and SOS-LF(TS) are typically conservative in computing the ROAs, and the latter needs the dynamics to be polynomial, which is a stringent requirement. While NLF(TS) overcomes the conservative performance issue, it requires a lot of gradient descent steps and a large number of training samples from the test-time system. Given the safety-critical nature of many applications, relying on NLF(TS) will delay the stability assessment process which in turn might lead to a catastrophic outcome, e.g., a city-wide blackout in the context of power systems. Hence, an adaptive Lyapunov function is needed, which doesn't compromise on the performance side and swiftly adapts to any arbitrary test-time system. We have also employed T-NLF for this purpose only to observe the simulation outcomes contrary to our expectations \textit{when a few training samples from the test-time system are available}. For significantly mismatched nominal and test-time systems, T-NLF's ROA falls behind the ROAs of QLF(TS) and SOS-LF(TS). We recall that the first step of T-NLF computation is training an NLF on the nominal system, which necessitates a huge amount of training data and time. However, T-NLF doesn't give a significant benefit in return. Finally, we argue that meta-NLF, by virtue of a systematic meta-training, will facilitate fast-adaptive and less conservative   
stability assessment on any test-time system. Note that to make a fair comparison with meta-NLF, we don't use the SMT solver-based numerical checker in either T-NLF or NLF(TS).

In the end, we evaluate meta-NLF from two perspectives: i) Can meta-NLF produce a comparable or larger ROA than non-adaptive methods that are explicitly computed on a test-time system? (ii) Can meta-NLF adapt to a test-time system better than T-NLF and fetch a less conservative ROA?

\subsection{Simulations}
\begin{figure*}[!ht]
       \begin{subfigure}[b]{\textwidth}
         \centering
         \includegraphics[width=\textwidth]{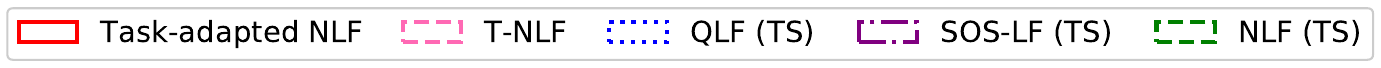}
     \end{subfigure}
     \begin{subfigure}[b]{0.3\textwidth}
         \centering
         \includegraphics[width=\textwidth]{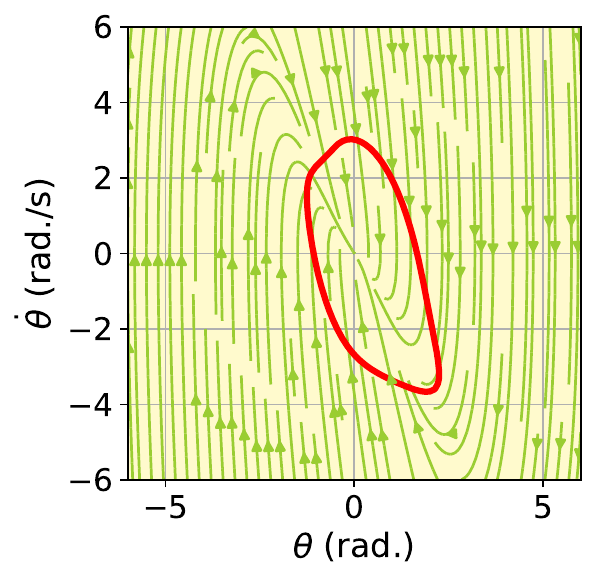}
        \caption{}
        \label{fig:ROAs-IP-1}
     \end{subfigure}
     \hfill
     \begin{subfigure}[b]{0.3\textwidth}
         \centering
         \includegraphics[width=\textwidth]{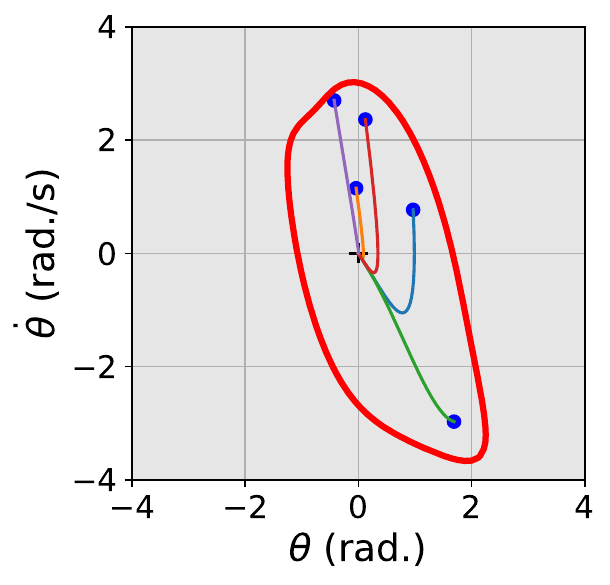}
        \caption{}
        \label{fig:ROAs-IP-2}
     \end{subfigure}
     \hfill
     \begin{subfigure}[b]{0.3\textwidth}
         \centering
         \includegraphics[width=\textwidth]{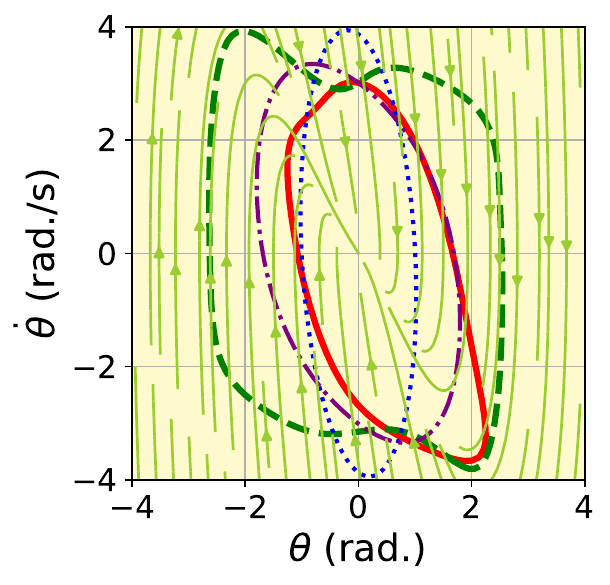}
        \caption{}
        \label{fig:ROAs-IP-3}
     \end{subfigure}
      \label{fig:ROAs-IP}
      \begin{subfigure}[b]{0.3\textwidth}
         \centering
         \includegraphics[width=\textwidth]{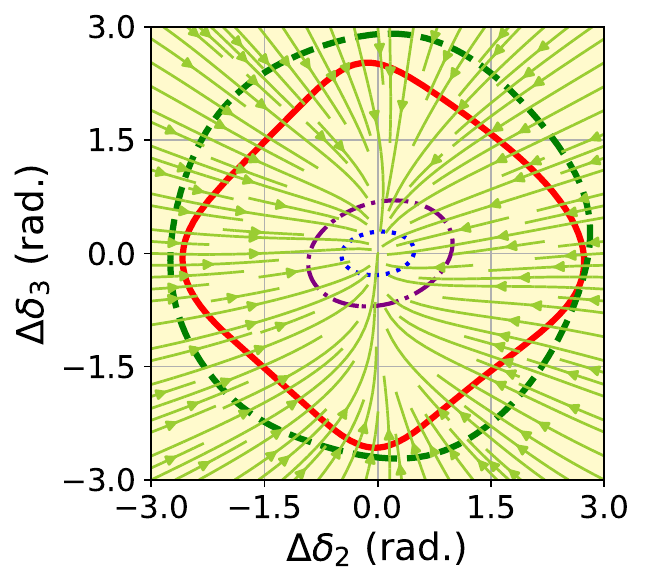}
         \caption{}
        \label{fig:ROAs-3mg-comp-1}
     \end{subfigure}
     \hfill
     \begin{subfigure}[b]{0.3\textwidth}
         \centering
         \includegraphics[width=\textwidth]{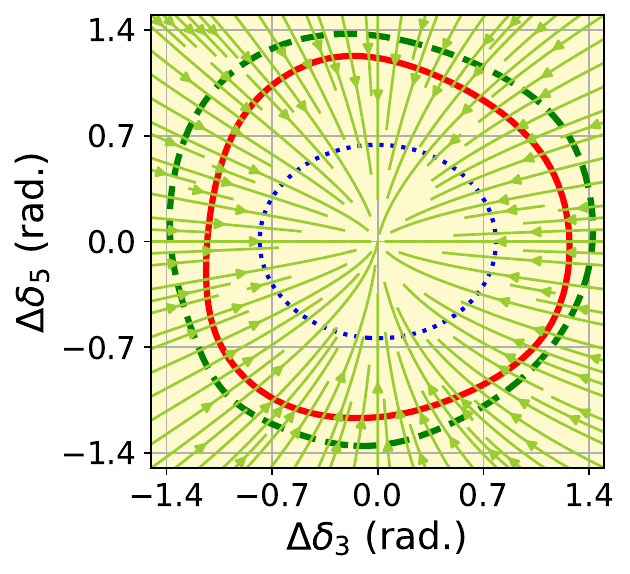}
      \caption{}
        \label{fig:ROAs-5mg-comp-5}
     \end{subfigure}
     \hfill
     \begin{subfigure}[b]{0.3\textwidth}
         \centering
         \includegraphics[width=\textwidth]{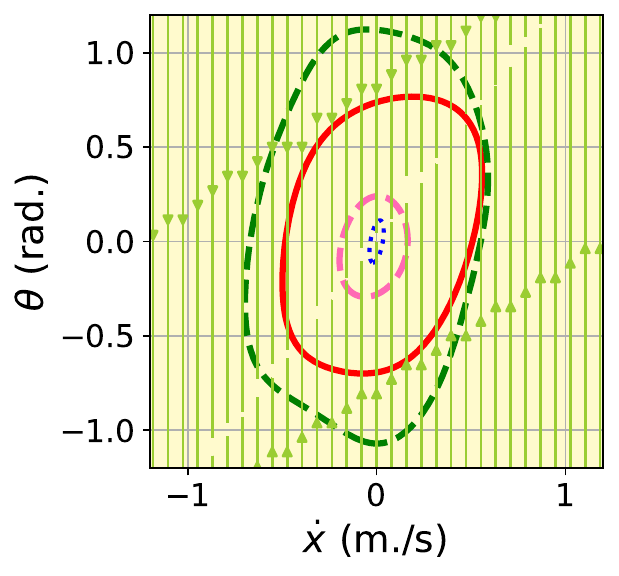}
          \caption{}
        \label{fig:ROAs-cf-comp-2}
     \end{subfigure}
     \caption{\textit{ROA verification and Comparison plots.} In each figure, the phase portrait provides a sense of the system's transient behavior, which serves as a ground truth for understanding each stability assessment method's reliability and conservativeness.
         For a test case of an Inverted Pendulum with stochastic $l$, (a) verifies the ROA of the task-adapted NLF (computed from the meta-NLF on a test-time system) against the phase portrait; (b) validates task-adapted NLF's ROA through time domain simulation; (c) shows task-adapted NLF's performance with other baselines through an ROA comparison. Each subsequent plot concerns a specific test case and compares the ROAs of all methods on it. The test cases are, (d) a Three-microgrid system with stochastic droop constants $(\text{dc}_1, \text{dc}_2)$; (e) a Five-microgrid system with stochastic droop constants $(\text{dc}_1,\hdots, \text{dc}_5)$; (f) Caltech Fan with stochastic $(m, r, d)$. We note that in Figure 2f, the system asymptotically converges to the origin even though the vector fields seem almost parallel to each other. }
    \end{figure*}

We compare meta-NLF's performance with other baselines by matching their ROAs on a test-time system. In each test case, the trained meta-NLF updates into a task-adapted NLF upon facing a test-time system. First, we consider the inverted pendulum environment, where the system parameters are the pendulum's length ($l$) and mass ($m$), acceleration due to gravity ($g$), and coefficient of friction ($b$). The system states are angular displacement from the upright position ($\theta$) and angular velocity ($\dot \theta$). The standard LQR controller closes the control loop so that the system turns ready for stability assessment. Next, we build a test case on this system by assuming $l$ to be stochastic and create the nominal and test-time systems by setting $\vartheta_0 = (l, m, g, b)_0 = (0.5, 0.15, 9,81, 0.1)$ and $\vartheta_{n+1} = (1.2, 0.15, 9.81, 0.1)$. In Figure \ref{fig:ROAs-IP-1}, we overlay the phase portrait of the system with task-adapted NLF's ROA to notice the latter capturing a region where the system tends to gravitate towards the origin. This testifies to the accuracy of our meta-learning-based stability assessment, which we again cross-check through a time domain simulation in Figure \ref{fig:ROAs-IP-2}. Here, the system converges to the origin from random starting points inside the  task-adapted NLF's ROA. Next, we move to Figure \ref{fig:ROAs-IP-3}, where the task-adapted NLF is comparable with QLF(TS) and SOS-LF(TS) but significantly falls behind NLF(TS) in terms of ROA area. T-NLF fails to capture an ROA in this case which suggests the superior adaptation of meta-NLF over the former method. We remark that for small dimensional systems with a few stochastic parameters (e.g., this setting where $l$ is stochastic), meta-NLF's full potential might not be quite evident. We perform additional experiments (see the Appendix) on the same system by assuming more parameters to be stochastic and find results where the task-adapted NLF convincingly beats SOS-LF(TS) and QLF(TS) in performance. In all our experiments, we observe the NLF(TS) to generate the largest ROA among all methods. This happens because NLF(TS) is a Lyapunov function that exclusively trains to assess the stability of the test-time system instead of adapting to it. 

For our second and third systems, we take two power system examples with $N$ microgrids that use droop control schemes to stabilize the evolution of phase angles. The state $x$ captures modified phase angle $[\Delta \delta_1, \Delta \delta_2, \hdots, \Delta \delta_N]$, where $\Delta \delta_j = \delta_j- \delta_j^{*}$ denote the difference between the $j$-th phase angle and its setpoint. Among all parameters, the droop constants $(\text{dc}_1, \hdots, \text{dc}_N)$, which regulate the strength of subsystem-wise control loops, are chosen to be stochastic. We start off with a three-microgrid system with stochastic $\text{dc}_1$ and $\text{dc}_2$, and set $\vartheta_0 = (\text{dc}_1, \text{dc}_2, \text{dc}_3)_0 = (2.0, 2.0, 2.0)$ and $\vartheta_{n+1} = (3.0, 4.3, 2.0)$. As illustrated in Figure \ref{fig:ROAs-3mg-comp-1}, we find the ROA of the task-adapted NLF (derived from meta-NLF) to be larger than that of QLF(TS) and SOS-LF(TS) but smaller than NLF(TS)'s ROA, while T-NLF doesn't succeed in sketching an ROA. Next, we consider a five-microgrid system and make all droop constants stochastic. In this case, the nominal and test-time parameters are  $\vartheta_0 = (\text{dc}_1, \hdots, \text{dc}_5)_0 = (2.0, 2.0, 2.0, 2.0, 2.0)$ and  $(\text{dc}_1, \hdots, \text{dc}_5)_{n+1} = (3.5, 3.5, 3.0, 4.0, 3.2)$. Figure \ref{fig:ROAs-5mg-comp-5} compares the ROAs of all methods, and we notice the updated task-adapted NLF leaving behind QLF(TS), SOS-LF(TS), and T-NLF in terms of performance. Specifically, the latter two methods completely fail to produce ROAs. Finally, we hypothesize the following based on the preceding simulation results: (i) despite QLF and SOS-LF being easier to compute on the test-time system,  meta-NLF should be preferred over them in the interest of performance. (ii) T-NLF isn't a reliable adaptive method as its performance remains questionable when the parametric stochasticity tends to be high.

Our final system is the Caltech Ducted Fan in hover mode \cite{caltechfan}. The state variables are $x, y, \theta$ that denote the horizontal and vertical positions and angular orientation of the center of the fan. The system parameters include mass $(m)$, moment of inertia $(J)$, gravitational constant $(g)$, and coefficient of viscous friction $(d)$. We assume $m, r, d$ to be stochastic and fix $\vartheta_0 = (m, J, r, g, d)_0 = (11.2, 0.0462, 0.15, 0.28, 0.1)$ and $ \vartheta_0 = (13.0, 0.0462, 0.165, 0.28, 0.15)$. Figure \ref{fig:ROAs-cf-comp-2} compares all ROAs and conveys the message that the task-adapted NLF exceeds T-NLF and QLF(TS) in performance while SOS-LF(TS) method fails to converge. Similar to the previous experiments, we conclude here that meta-NLF outperforms T-NLF, QLF(TS), and SOS-LF(TS) while it stays competitive with NLF(TS).

 \begin{remark}
 For three-dimensional or larger systems taken in the experiments (in Figures \ref{fig:ROAs-3mg-comp-1}, \ref{fig:ROAs-5mg-comp-5} and \ref{fig:ROAs-cf-comp-2}), we show two-dimensional projections of ROAs onto specific planes for visual comparison purposes.
 \end{remark}
\section{Conclusion}
In this work, we presented a meta-learning-based Lyapunov function that adapts well to dynamical system instances under parametric shifts. To achieve this, meta-training is conducted using a set of dynamical system instances, where each instance represents a deterministic snapshot of a system exhibiting parametric stochasticity. Upon convergence of meta-training, the meta-function (which we term the meta-NLF) faces a new system instance and adapts to it quickly (in a few gradient steps). To the best of our knowledge, this is the first work that integrates meta-learning with NLFs to formulate adaptive stability certificates.
We compared meta-NLF with three non-adaptive baselines and an adaptive baseline method on different benchmark dynamical systems to highlight the superior adaptive performance of our method. One limitation of our present work is we assume a small dataset from the test-time system to be available on which our meta-function performs a few gradient update steps. In future work, we will extend our method to a meta-reinforcement learning framework and try to learn a policy function that adapts to an arbitrary test-time environment and stays both optimal and stable (in the Lyapunov sense) in it. Formulations of meta-learning-based barrier functions and contraction metrics are two other exciting research areas that we plan to study.


\section*{Acknowledgements}
This work is supported in part by NSF ECCS-2038963, the U.S. Department of Energy (DoE) Office of Energy Efficiency and Renewable Energy (EERE) under the Solar Energy Technologies Office (SETO) Award Number DEEE0009031, and Texas A\&M Engineering Experiment Station (TEES) Smart Grid Center.


\bibliography{aaai24}
\onecolumn
\appendix
\section{Verification of $V_\theta(x)$ being Positive Definite in $D$}
In the subsection \textit{Verification of Meta-NLF}, we informally propose that (i) if a Lipschitz task-adapted Lyapunov function $V_{\theta^{'}}(.)$ satisfies a tightened version of Lyapunov constraint \eqref{eq:lyapunov-condition-2} everywhere on a discretized grid, and (ii) the discretized grid tightly covers $D$, then $V_{\theta^{'}}(.)$ stays positive definite everywhere in $D$. Here, we formalize this hypothesis and provide its proof. Before proceeding to the proposition, we define $D_\tau$ as a discretized grid of valid region $D$ such that that $\|x-[x]_\tau\|_1 \leq \tau$ for any $x \in D$, where $[x]_\tau$ is the nodal point in $D_\tau$ that has the smallest $\ell_1$ distance to $x$.

\begin{proposition}
    Let Assumption 1, equation \eqref{eq:-lip-lyap-a} hold. Then, an arbitrary task-adapted NLF $V_{\theta^{'}}(.)$ satisfies $V_{\theta^{'}}(x) > 0$, $x\in D$, if the following holds:
    \begin{equation}
    V_{\theta^{'}}([x]_\tau) > K_V \tau \qquad \forall [x]_{\tau} \in D_\tau,
    \end{equation}
    where $K_V$ is the Lipschitz constant and $\tau > 0$ signifies how densely $D_\tau$ covers $D$.
\end{proposition}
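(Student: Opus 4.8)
The plan is to use a direct covering-and-Lipschitz argument: the hypothesis controls $V_{\theta^{'}}$ at the grid nodes, and Lipschitz continuity propagates that control to every point of $D$ at an additive cost of exactly $K_V \tau$, which is precisely the margin built into the tightened condition. This mirrors the derivative-based verification for \eqref{eq:lyapunov-condition-3} attributed to \cite{berkenkamp2017safe}, applied instead to the value of $V_{\theta^{'}}$ rather than its Lie derivative.

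First I would fix an arbitrary point $x \in D$ and pass to its nearest node $[x]_\tau \in D_\tau$. By the definition of the discretization $D_\tau$, this node satisfies $\|x - [x]_\tau\|_1 \leq \tau$, so $x$ is never farther than the covering radius from a point where the hypothesis supplies a lower bound on $V_{\theta^{'}}$.

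Next I would invoke Assumption~1, equation \eqref{eq:-lip-lyap-a}. Because that assumption holds uniformly over $\theta \in \Theta$, the task-adapted parameter $\theta^{'}$ inherits the same Lipschitz constant $K_V$ (as already noted in the text), giving
\begin{equation*}
|V_{\theta^{'}}(x) - V_{\theta^{'}}([x]_\tau)| \leq K_V \|x - [x]_\tau\|_1 \leq K_V \tau.
\end{equation*}
Combining this with the hypothesis $V_{\theta^{'}}([x]_\tau) > K_V \tau$ then yields
\begin{equation*}
V_{\theta^{'}}(x) \geq V_{\theta^{'}}([x]_\tau) - K_V \tau > 0.
\end{equation*}
Since $x \in D$ was arbitrary, positive definiteness on all of $D$ follows.

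The argument has no genuinely hard step; the only point requiring care is the consistency of norms. The covering bound defining $D_\tau$ is stated in the $\ell_1$ norm, so for the Lipschitz estimate to combine cleanly with it I would read the (unsubscripted) norm in \eqref{eq:-lip-lyap-a} as the $\ell_1$ norm. With any other norm one would first pay an equivalence-of-norms constant, which would merely rescale the required margin from $K_V \tau$ to a constant multiple thereof. Beyond this bookkeeping, the result is an immediate consequence of the triangle inequality.
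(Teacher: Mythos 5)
Your proof is correct and follows essentially the same argument as the paper's: apply the Lipschitz bound \eqref{eq:-lip-lyap-a} between $x$ and its nearest grid node $[x]_\tau$, then use the tightened condition $V_{\theta^{'}}([x]_\tau) > K_V\tau$ to conclude $V_{\theta^{'}}(x) > 0$. Your remark about reading the norm in \eqref{eq:-lip-lyap-a} as the $\ell_1$ norm is apt, since the paper's own proof silently writes that inequality with $\|\cdot\|_1$ to match the definition of $D_\tau$.
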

\begin{proof}
    Using the Lispschitzness assumption of $V_{\theta^{'}}$ (equation \eqref{eq:-lip-lyap-a}) and definition of $D_\tau$, we have the following for any $x \in D$:
    \begin{eqnarray}
        & |V_{\theta^{'}}(x) - V_{\theta^{'}}([x]_\tau)|
        \leq K_V\|x - [x]_\tau\|_1 \leq K_V \tau \nonumber \\
        \implies & V_{\theta^{'}}([x]_\tau)-K_V \tau\leq V_{\theta^{'}}(x) \leq V_{\theta^{'}}([x]_\tau) +K_V \tau.
    \end{eqnarray}
Using the above inequality, we have $V_{\theta^{'}}(x) > 0 $ whenever $V_\theta([x]_\tau) > K_V \tau$. This completes the proof.
\end{proof}
Finally, by choosing $\epsilon_1 = K_V \tau$, we can provide a tightened condition $V_{\theta^{'}}([x]_\tau) > \epsilon_1$, for $[x]_\tau \in D_\tau$,  which will lead to the satisfaction of $V_{\theta^{'}}(x) >0$, for $x \in D$.

\section{Additional Experimental Results}
All experiments have been conducted in a 3.0GHz, 6-core Intel Corp i7-9700K CPU. We have used Python-based Numpy and Pytorch packages to train Meta-NLF and all other neural network-based baseline methods. For Quadratic and SOS Lyapunov functions, we have utilized the Control System Toolbox on Matlab. The additional details on the baseline methods and dynamical systems are detailed below.

\subsection{Baseline Methods}
In \textit{Experiments} section, we have employed SOS-LF(TS), QLF(TS), NLF(TS), and T-NLF as baseline methods. We give the details of these methods below.
\begin{enumerate}
    \item \textbf{Sum of Squares-based Lyapunov function (SOS-LF)}: Traditionally, SOS programming is utilized to obtain the Lyapunov function of a system with polynomial dynamics. However, a dynamical system may contain sinusoidal or exponential terms for which the SOS method isn't applicable. As suggested in \cite{richards2018lyapunov}, we replace $\sin(x)$, $\cos(x)$, and all other non-polynomial terms in the dynamics with their Taylor series expansions around the origin. This way, non-polynomial dynamics can be transformed into a polynomial one on which the SOS method can be implemented.

    \item \textbf{Quadratic Lyapunov function (QLF)}: This approach linearizes the system dynamics around the origin and obtains a quadratic function that satisfies the Lyapunov constraints on the linearized system. In essence, this function serves as a local Lyapunov function that assesses the stability of the non-linear system in a small neighborhood around the origin.
    \item \textbf{Neural Lyapunov function (NLF)}: This is a data-driven method that uses a neural network to approximate a Lyapunov function on any given system dynamics \cite{chang2019neural}. Based on a training dataset containing randomly-sampled state vectors and  the dynamics $f(x)$ evaluated at these samples, NLF training utilizes gradient descent steps to optimize a loss function that penalizes the violation of Lyapunov constraints. This work has been initially proposed in \cite{chang2019neural}, where an additional SMT solver numerically checks if the trained neural network violates the Lyapunov constraints inside a given valid region in a state space. We have adopted an alternative Lipschitz approach \cite{richards2018lyapunov} in the experiments because  SMT checkers don't scale well to larger systems and often completely fail to converge to a solution.
    \item \textbf{Transfer learning-based neural Lyapunov function (T-NLF)}: We modify the standard NLF by fully training it on the nominal system and then transferring it to the test-time system with a few gradient updates. We note that during testing, we have limited data and time available to update the NLF on the test-time system. Same as in the previous algorithm, we don't use the SMT solver because of its lack of scalability.
\end{enumerate}
\subsection{Dynamical Systems}
\subsubsection{Inverted Pendulum}
\label{app:IP}
The Inverted Pendulum is a setup with mass $m$ attached to a hinge with a coefficient of friction $b$ via a massless rod of length $l$. A torque $u$ acts as a control input and tries to keep the pendulum upright, which otherwise will fall. The system dynamics are described as the following:
\begin{equation}
    \ddot \theta = \dfrac{mgl\sin{\theta} - b\dot\theta + u}{ml^2},
\end{equation}
where $\theta$ is the angular displacement from the upright position and  $\dot \theta$ is the angular velocity. We use an LQR-based controller to compute $u$ and make the system a closed-loop one.

\subsubsection{A power system containing $N$ microgrids}
Power systems are networks of electrical components designed to generate, supply, and utilize electric power. Most power systems can be treated as autonomous control systems that use droop control schemes for voltage, phase angle, or frequency stabilization. We consider a power system network containing $N$ microgrids that are governed by the following dynamics:
\begin{eqnarray} 
    \label{eq:MG_interface_dynamics}
    & J_{\delta_i} \Delta\dot{\delta_i} = -\text{dc}_i\Delta \delta_i-\Delta P_i +K_i\Delta\delta_i \\ 
    & P_i = \sum_{k\ne i}E_i^{*}E_k^{*}Y_{ik} \cos(\delta_{ik} - \gamma_{ik}) \nonumber \\
    & + {E_i^{*}}^2G_{ii}, i\in \{1,\hdots, N\},
\end{eqnarray}
where  $\Delta \delta_i =\delta_i-\delta_i^*$; $\Delta P_i =P_i-P_i^*$, $\delta_i^*$, $P_i^*$, $Q_i^*$ are the nominal set point values of the voltage magnitude, phase angle, and active power at the $i$\text{th} microgrid, respectively; $J_{\delta_i}$s are tracking time constants; $\text{dc}_i$s are droop coefficients; $K_i$ is the $i-$th entry in an output-feedback gain matrix and $Y_{ik}$ and $G_{ii}$ are Admittance matrix parameters. This is an autonomous system functioning with droop controllers.

\subsubsection{Caltech Fan in hover mode}
Caltech Ducted Fan is another standard control system example developed by Cal Tech that mimics a simplified flight control experiment \cite{caltechfan}. The state variables are $x, y, \theta$ that denote the horizontal and vertical positions and angular orientation of the center of the fan. There are two restoring forces $f_1$ and $f_2$, where $f_1$ acts perpendicular to the axis at a distance $r$, and $f_2$ acts parallel to the fan's axis. To bring the equilibrium point to the origin when inputs are zero, the control inputs are defined as $u_1 = f_1$ and $u_2 = f_2 - mg$.  The system parameters are mass $(m)$, moment of inertia $(J)$, gravitational constant $(g)$, and coefficient of viscous friction $(d)$. The dynamics of this system are given as follows:
\begin{eqnarray}
& m \ddot{x} =-m g \sin \theta-d \dot{x}+u_1 \cos \theta-u_2 \sin \theta \\
& m \ddot{y} =m g(\cos \theta-1)-d \dot{y}+u_1 \sin \theta+u_2 \cos \theta \\
& J \ddot{\theta}  =r u_1.
\end{eqnarray}
We incorporate an LQR controller to close the control loop which helps make the system autonomous.  

\subsection{Additional Experiments}
Here, we present additional experiments to compare the stability assessment performance of Task-adapted NLF with all baselines. Next, as discussed in \textit{Experiments} section, we create a nominal and a test-time system for each experiment to test out the performance of each method. We note that the default parameter $\vartheta_{0}$ has been held fixed across all experiments performed on a specific system. However, the test-time parameters $\vartheta_{n+1}$ differ for each experiment.

\subsubsection{Additional Experiments on Inverted Pendulum}
We recall that the system parameter vector is $\vartheta = (l,m,g,b)$. The default parameter for all experiments is $\vartheta_0 = (0.5, 0.15, 9.81, 0.1)$. First, parameters $l, b$ are considered uncertain, and the test-time parameter $\vartheta_{n+1}$ is set as $(1.35, 0.15, 9.81, 0.2)$. In the following experiment, all four parameters, i.e., $l, m, g, b$, are assumed to be stochastic, and $\vartheta_{n+1}$ is sampled to be $(1.0, 0.2, 9.0, 0.2)$. Figures \ref{fig:ROAs-IP-comp-2} and \ref{fig:ROAs-IP-comp-3} illustrate the ROA comparisons for the first and second cases, respectively. In both cases, task-adapted NLF (obtained from adaptation of meta-NLF) is observed to (i) outperform T-NLF, QLF(TS) and SOS-LF(TS), (ii) and closely compete with NLF(TS) in performance. The ROAs are compared quantitatively for each experiment in Table \ref{ROA-table}.

\begin{figure}[t]
       \begin{subfigure}[b]{\textwidth}
         \centering
         \includegraphics[width=\textwidth]{plots/IP/Legend.pdf}
     \end{subfigure}
    \begin{subfigure}[b]{0.45\textwidth}
         \centering
         \includegraphics[width=\textwidth]{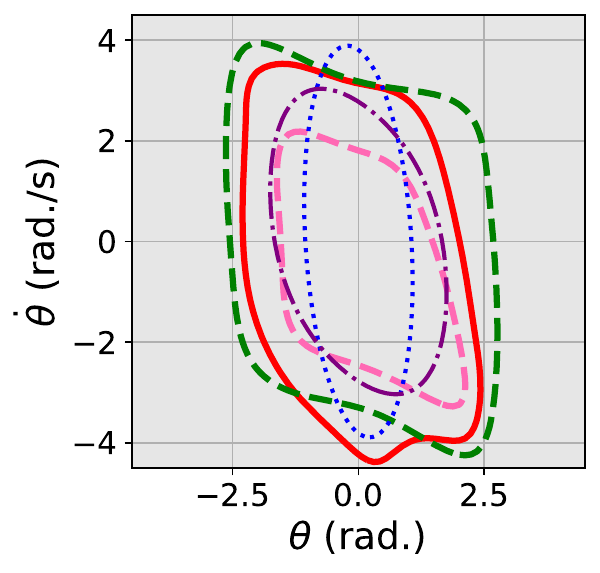}
         \caption{}
          \label{fig:ROAs-IP-comp-2}
     \end{subfigure}
     \hfill
     \begin{subfigure}[b]{0.45\textwidth}
         \centering
         \includegraphics[width=\textwidth]{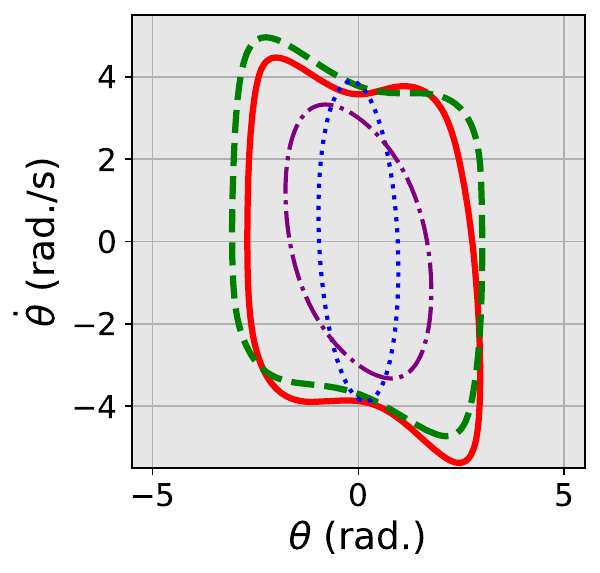}
        \caption{}
        \label{fig:ROAs-IP-comp-3}
     \end{subfigure}
     \caption{\textit{ROA Comparison plots for Inverted Pendulum}. For an Inverted Pendulum setup, two different test cases are created by making (a) stochastic $(l, b)$ (b) stochastic $(l, m, g, b)$. In each case, the non-stochastic parameters are set at nominal values. Task-adapted NLF (obtained through test-time adaptation of meta-NLF) is better adaptive than SOS-LF(TS) and QLF(TS), as the former achieves a larger ROA. T-NLF fails to capture an ROA in (b), which shows that a naive way of transferring an NLF to the test-time system followed by a fine-tuning step doesn't always result in a valid Lyapunov function.}
    \end{figure}

\subsubsection{Additional Experiments on N-microgrid Systems}
We start with a three-microgrid system whose droop constants are assumed to be stochastic parameters that randomly vary over time. The nominal parameter is set as the following, $\vartheta_0 = (\text{dc}_1, \text{dc}_2, \text{dc}_3)_0 = (2.0, 2.0, 2.0)$. In the additional experiment, we assume all droop constants are stochastic, and the test-time parameter $\vartheta_{n+1}$ is $(2.5, 4.5, 3.9)$. The ROAs are compared in Figure \ref{fig:ROAs-3mg-comp-2}. Next, we take a five-microgrid system (illustrated in Figure \ref{fig:IEEE-test-feeder}), and similar to the previous setup, the droop constants are regarded as uncertain. The nominal and test-time parameters are $\vartheta_0 = (\text{dc}_1, \text{dc}_2, \text{dc}_3, \text{dc}_4, \text{dc}_5)_0 = (2.0, 2.0, 2.0, 2.0, 2.0)$ and $\vartheta_{n+1} = (3.5, 3.5, 3.0, 4.0, 3.2)$. In Figure \ref{fig:ROAs-5mg-comp-4}, we compare ROAs of all stability assessment methods. Both figures concurrently convey that, the task-adapted NLF's performance is better than T-NLF, QLF(TS), and SOS-LF(TS). Also, task-adapted NLF is almost as good as NLF(TS) on the test-time system which justifies our claim for task-adapted NLF's superior adaptive performance. The ROAs are compared quantitatively in Table \ref{ROA-table}

We note that in each microgrid-based test case above, except the droop constants, all other networked microgrid parameters have been taken from \cite{jena2022distributed}. 

\begin{figure}[t]
       \begin{subfigure}[b]{\textwidth}
         \centering
         \includegraphics[width=\textwidth]{plots/IP/Legend.pdf}
     \end{subfigure}
    \begin{subfigure}[b]{0.3\textwidth}
         \centering
         \includegraphics[width=\textwidth]{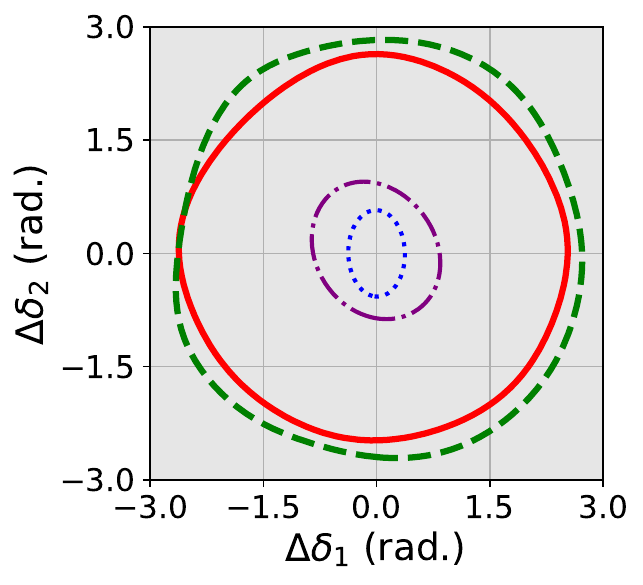}
       \caption{}
        \label{fig:ROAs-3mg-comp-2}
     \end{subfigure}
     \hfill
     \begin{subfigure}[b]{0.3\textwidth}
         \centering
         \includegraphics[width=\textwidth]{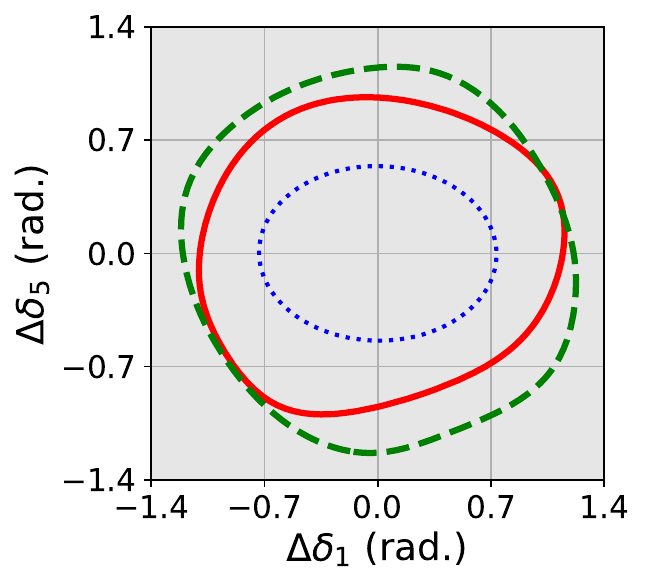}
      \caption{}
        \label{fig:ROAs-5mg-comp-4}
        \end{subfigure}
    \hfill
    \begin{subfigure}[b]{0.3\textwidth}
         \centering
         \includegraphics[width=\textwidth]{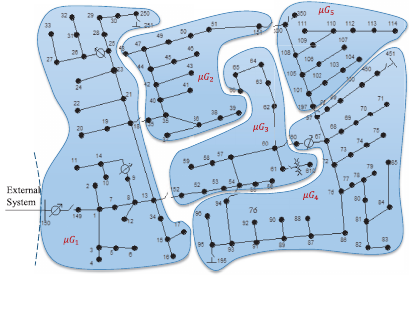}
      \caption{}
        \label{fig:IEEE-test-feeder}
     \end{subfigure}
     \caption{\textit{ROA Comparison plots for power systems with $N$ microgrids.}  We have two test cases, i.e., (a) a 3-microgrid system with stochastic droop constants $(\text{dc}_1, \text{dc}_2, \text{dc}_3)$ ;(b) A 5-microgrid system with stochastic droop constants $(\text{dc}_1, \text{dc}_2)$. We present an IEEE 123-node test feeder network partitioned into a 5-microgrid system (figure taken from \cite{jena2022distributed}) in (c). The task-adapted NLF (computed by updating the meta-NLF on the test-time system) is almost as optimal as NLF(TS) and exceeds all other benchmark methods in both cases. The SOS-LF(TS) doesn't converge for case (b) and thus isn't included in the corresponding figure.}
    \end{figure}

\subsubsection{Additional Experiments on Caltech Fan in Hover Mode}
A schematic diagram of Caltech Fan is shown in 
Figure \ref{fig:caltech_fan}. Now, we recall the system parameters to be $(m, J, r, g, d)$ whose nominal value is set as $(11.2, 0.0462, 0.15, 0.28, 0.1)$. We make $m$ stochastic and sample a test-time parameter as $(13.0, 0.0462, 0.15, 0.28, 0.1)$. The ROAs of all methods are presented in Figure \ref{fig:ROAs-cf-comp-1}, where task-adapted NLF, like all previous experiments, exceeds QLF(TS), SOS-LF(TS), and T-NLF and performs nearly as optimal as NLF(TS) on the test-time system.

\begin{figure}[t]
       \begin{subfigure}[b]{\textwidth}
         \centering
         \includegraphics[width=\textwidth]{plots/IP/Legend.pdf}
     \end{subfigure}
         \centering
     \begin{subfigure}[b]{0.45\textwidth}
         \centering
         \includegraphics[width=\textwidth]{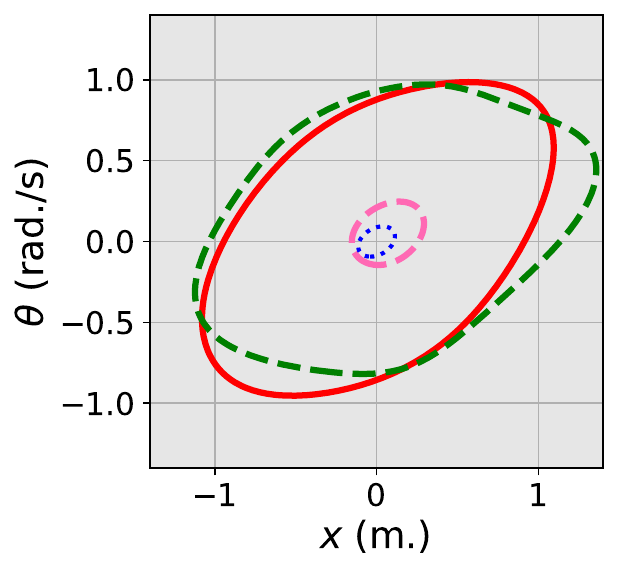}
           \caption{}
        \label{fig:ROAs-cf-comp-1}
     \end{subfigure}
     \hfill
     \begin{subfigure}[b]{0.45\textwidth}
         \centering
         \includegraphics[width=\textwidth]{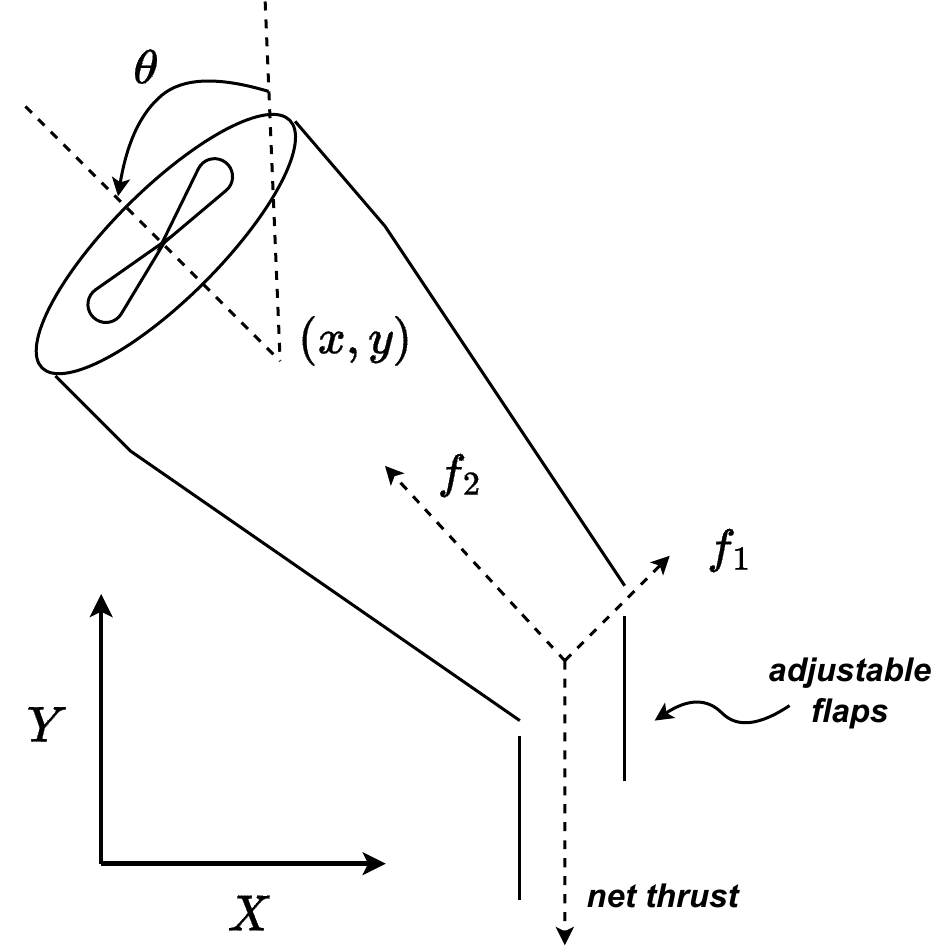}
          \caption{}
        \label{fig:caltech_fan}
     \end{subfigure}
     \caption{\textit{ROA Comparison plots for the Caltech Fan in hover mode.} The test case is built by assuming a stochastic $m$ in (a);  In (b), we have a diagrammatic representation of the Caltech Fan. The ROA of task-adapted NLF (computed by meta-NLF's test-time adaptation) is close to the ROA of NLF (TS) and exceeds that of all other baseline methods. SOS-LF(TS) algorithm fails to converge and hence is omitted.}
    \end{figure}

\begin{table*}[b]
\caption{A Quantitative Comparison of Stability Assessment Performances of Task-adapted NLF and Baseline \\ Methods}
\label{ROA-table}
\vskip 0.15in
\begin{center}
\begin{small}
\begin{sc}
\begin{tabular}{lcccccr}
\hline
Env. &  Stoch. params & NLF (TS) & QLF (TS) & SOS-LF (TS) & T-NLF & Task-adapted\\
&&&&&&NLF\\[1ex]
\hline
Adaptive ? & & \xmark  & \xmark & \xmark  & \cmark & \cmark \\[1ex]
\hline
IP    & $(l)$      & 31.83     & 12.42     & 17.96   & 0 & 15.04\\[1ex]
    & $(l , b)$      & 34.28   & 12.90     & 15.78     & 13.72 & 28.04\\[1ex]
    & $(l, m, g, b)$ & 46.04   & 11.67     & 17.56   & 0 & 42.67\\[1ex]
    \hline
    3-MG   & $(dc_1, dc_2)$    & 71.94       & 1.60    & 6.82     & 0 & 52.89 \\[1ex]
       & $(dc_1, dc_2, dc_3)$ & 66.99            & 2.47    & 5.63    & 0  & 60.41 \\[1ex]
    \hline
5-MG   & $(dc_1, dc_2)$    & 42.83       & 13.02    & -    & 0 & 37.09\\ [1ex]
    & $(dc_1, dc_2, dc_3, dc_4, dc_5)$    & 58.93       & 17.10    & -    & 0  & 44.35\\ [1ex]
    \hline
CF  & $(m)$    & 33.35      & 0.23         & -       & 1.5  & 25.16 \\ [1ex]
    & $(m, r, d)$          & 21.45     & 0.22        & -       & 1.43   & 13.72 \\
\hline
\end{tabular}
\end{sc}
\end{small}
\end{center}
\vskip -0.1in
\end{table*}

\end{document}